\newtheorem{fact}{Fact}
\def\R{\Bbb R}
\def\Z{\Bbb Z}
\begin{document}

\pagestyle{headings}

\mainmatter

\title{Optimal conditions for connectedness\\ of discretized sets}

\titlerunning{Optimal conditions for connectedness of discretized sets}

\author{Boris Brimkov \and Valentin E. Brimkov}
\authorrunning{Boris Brimkov and Valentin E. Brimkov}   
%
\institute{Department of Computational \& Applied Mathematics,\\
Rice University, Houston, TX 77005, USA\\
\email{boris.brimkov@rice.edu}\\
\and
Mathematics Department,\\
SUNY Buffalo State, Buffalo, NY 14222, USA\\
\email{brimkove@buffalostate.edu}}

\maketitle

\begin{abstract}
Constructing a discretization of a given set  is a major problem in various theoretical and applied disciplines.
An offset discretization of a set $X$ is obtained by taking the integer points inside a closed neighborhood of $X$ of a certain radius. 
In this note 
we determine a minimum threshold for the offset radius, beyond which  the discretization of a disconnected set is always connected. The results hold for a broad class of  disconnected and unbounded subsets of $\R^n$, and generalize several previous results. Algorithmic aspects and possible applications are briefly discussed. 

\noindent {\bf Keywords:} discrete geometry, geometrical features and analysis, connected set, discrete connectivity, connectivity control, offset discretization   
  
\end{abstract}
\section{Introduction}
Constructing a discretization of a set $X \subseteq \R^n$ is a major problem in various theoretical and applied disciplines, such as numerical analysis, discrete geometry, computer graphics, medical imaging, and image processing. For example, in numerical analysis, one may need to transform a continuous domain of a function into its adequate discrete analogue. In raster/volume graphics,  one looks for  a rasterization that converts an image described in a vector graphics format into a raster image built by pixels or voxels.  Such studies often elucidate interesting relations between continuous structures and their discrete counterparts.   

Some of the earliest ideas and results for set discretization belong to Gauss (see, e.g., \cite{gauss}); Gauss discretization is still widely used in theoretical research and applications. 
A number of other types of discretization have been studied by  a large number of authors (see, e.g., \cite{andres,andres2,Kauf97,figueiredo96new,jonas,kaufman1,Kim83,rosenfeld,rosenfeld2,tajine} 
and the bibliographies therein). These works focus on special types of sets to be discretized, such as straight line segments, circles, ellipses, or some other classes of curves in the plane or on other surfaces.  

An important requirement for any discretization is to preserve certain topological properties of the original object. Perhaps the most important among these is the connectedness or disconnectedness of the discrete set obtained from a discretization process.  
This may be crucial for various applications ranging from medicine and bioinformatics  (e.g. organ and tumor measurements in CT images, beating heart or lung simulations, protein binding simulations) to robotics and engineering (e.g. motion planning, finite element stress simulations).
Most of the works cited above address issues related to the connectedness of the obtained discretizations.  To be able to perform a reliable study of the topology of a digital object by means of shrinking, thinning and skeletonization algorithms (see, e.g. \cite{gabriela0,gabriela1,KLE2004,gabriela3,gabriela2} and the bibliography therein), one needs to start from a faithful digitization of the original continuous set.

Perhaps the most natural and simple type of discretization of a set $X$ is the one  defined by the integer points within a closed neighborhood of $X$ of a certain radius $r$.
This will be referred to as an $r$-offset discretization.  
Several authors have studied properties of offsets of certain curves and surfaces \cite{offset2,offset3,offset4,offset5}, however without being concerned with the properties of the integer set of points enclosed within the offset. In
\cite{Jamet06b,Jamet06a} results are presented on offset-like conics discretizations.   
Conditions for connectedness of offset discretizations of bounded path connected or connected sets are presented in \cite{brimkov,bbb,stelldinger}. 

While all related works study conditions under which connectedness of the original set is preserved upon discretization, in the present paper we determine minimum thresholds for the offset radius, beyond which disconnectedness of a given original set is never  preserved, i.e., the obtained discretization is always connected. 
The results hold for a broad class of  disconnected subsets of $\R^n$, which are allowed to be unbounded. 
The technique we use is quantizing the (possibly uncountable and unbounded) set $X$ by the minimal countable (possibly infinite) set of voxels containing $X$, which makes the use of induction feasible.   
To our knowledge, these are the first results concerning offset discretizations of disconnected sets. They extend a result from \cite{bbb} which gives best possible bounds for an offset radius to  guarantee 0- and $(n-1)$-connectedness of the offset discretization of a bounded path-connected set; they also generalize a result from \cite{brimkov} to unbounded connected sets.

In the next section we introduce various notions and notations to be used in the sequel. 
In Sections \ref{main1} we present the main results of the paper.
We conclude in Section~\ref{concl} by addressing some algorithmic aspects and possible applications.  
\section{Preliminaries}
\label{sec:def}
We recall a few basic notions of general topology and discrete geometry. For more details we refer to \cite{engelking,KLE2004,KON2001}. 

All considerations take place in $\R^n$ with the Euclidean norm.
By $d(x,y)$ we denote the {\em Euclidean distance} between points $x,y \in \R^n$. Given two sets $A,B \subset \R^n$, the number $g(A,B) = \inf_{x,y} \{ d(x,y) : x \in A, y \in B \}$ is called the {\em gap}\footnote{The function $g$ itself, defined on the subsets of $\R^n$ is called a {\em gap functional}. See, e.g., \cite{beer} for more details.} between the sets $A$ and $B$.
$B^n(x,r)$  is the closed $n$-ball of radius $r$ and center $x$ (dependence on $n$ will be omitted when it is clear from the context).
Given a set $X \subseteq \R^n$, $|X|$ is its cardinality. The {\em closed $r$-neighborhood} of $X$, which we will also refer to as the $r${\em -offset} of $X$, 
is defined by $U(X,r)=\cup_{x \in X} B(x,r)$.
$Cl(X)$ is the {\em closure} of $X$, i.e., the union of $X$ and the limit points of $X$. 
$X$ is {\em connected} if it cannot be presented as a union of two nonempty subsets 
that are contained in two disjoint open sets.
Equivalently, $X$ is connected if and only if it cannot be presented as a union of two nonempty subsets each of which 
is disjoint from a closed superset of the other.

In a discrete geometry setting, considerations take place in the {\em grid cell model}.
In this model, the regular orthogonal grid subdivides $\R^n$ into
$n$-dimensional unit hypercubes (e.g., unit squares for $n=2$ or unit cubes for $n=3$).
These are regarded as {\it $n$-cells}  and are called {\em hypervoxels}, or {\em voxels}, for short.
The $(n-1)$-cells, $1$-cells, and $0$-cells of a voxel are referred to as {\em facets}, {\em edges},  and {\em vertices}, respectively. 

Given a set $X \subseteq \R^n$, $X_{\Z}= X \cap \Z^n$ is its {\em Gauss discretization}, while
$\Delta_r(X)=U(X,r) \cap \Z^n$ is its {\em discretization of radius $r$}, which we will also call the {\em $r$-offset discretization} of $X$.

Two integer points are $k$-{\em adjacent} for some $k$,
$0 \leq k \leq n-1$, iff no more than $n-k$ of their coordinates differ by 1. A $k$-{\em path} (where $0 \leq k \leq n-1$) in a set $S \subset \Z^n$ is a sequence of
integer points from $S$ such that every two consecutive points of the path
are $k$-adjacent. 
Two points of $S$ are {\em $k$-connected} (in $S$) iff there is a $k$-path in $S$ between them. 
$S$ is $k$-{\em connected} iff there is a $k$-path in $S$ connecting any two points of $S$. 
If $S$ is not $k$-connected, we say that it is $k$-{\em disconnected}.
A maximal (by inclusion) $k$-connected subset of $S$ is called a $k$-{\em (connected) component} of $S$.
Components of nonempty sets are nonempty and any union of distinct $k$-components is $k$-disconnected.
Two voxels $v,v'$ are $k$-adjacent if they share a $k$-cell.
Definitions of connectedness and components of a set of voxels are analogous to those for integer points.

In the proof of our result we will use the following well-known facts (see \cite{bbb}). 
\begin{fact}
\label{L1}
Any closed $n$-ball $B \subset \R^n$ with a radius greater than or equal to $\sqrt{n}/2$ 
contains at least one integer point.
\end{fact}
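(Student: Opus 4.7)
The plan is to show that every point of $\R^n$ lies within distance $\sqrt{n}/2$ of some integer point; the claim then follows immediately, since a ball of radius at least $\sqrt{n}/2$ centered at any point $x$ will contain such a nearby integer point.

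To carry this out, I would let $B = B(x, r)$ with $r \geq \sqrt{n}/2$ and explicitly construct an integer point $v$ within distance $\sqrt{n}/2$ of the center $x = (x_1, \ldots, x_n)$. For each coordinate $i$, I choose $v_i \in \Z$ to be the integer nearest to $x_i$ (breaking ties arbitrarily), so that $|x_i - v_i| \leq 1/2$. Setting $v = (v_1, \ldots, v_n) \in \Z^n$, a direct computation of the Euclidean distance gives
\[
d(x, v) = \sqrt{\sum_{i=1}^n (x_i - v_i)^2} \leq \sqrt{\sum_{i=1}^n (1/2)^2} = \frac{\sqrt{n}}{2} \leq r,
\]
so $v \in B$ as required.

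There is essentially no obstacle here: the only subtlety is recognizing that the extremal configuration — a ball whose center sits at the centroid of a unit lattice hypercube, equidistant from all $2^n$ of its integer vertices at distance exactly $\sqrt{n}/2$ — is what forces the bound $\sqrt{n}/2$ to be tight, and that the coordinate-wise rounding argument above realizes this worst case. No case analysis or induction on $n$ is needed, since the bound decomposes cleanly coordinate-by-coordinate.
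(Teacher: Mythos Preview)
Your argument is correct and is the standard proof of this fact. Note that the paper itself does not supply a proof: it lists this as one of several ``well-known facts'' with a reference to \cite{bbb}, so there is no in-paper argument to compare against. Your coordinate-wise rounding is exactly the expected justification and would serve perfectly well here.
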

\begin{fact}
\label{F1}
Let $A$ and $B$ be sets of integer points, each of which is $k$-connected.
If there are points $p \in A$ and $q \in B$ that are $k$-adjacent, then $A \cup B$ is $k$-connected.
\end{fact}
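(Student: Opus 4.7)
The plan is to unpack the definition of $k$-connectedness for $A\cup B$ directly: I need to show that any two points $x,y\in A\cup B$ are joined by a $k$-path lying entirely inside $A\cup B$. Since $A$ and $B$ are each $k$-connected and $p\in A$, $q\in B$ are $k$-adjacent, the pair $(p,q)$ will serve as a ``bridge'' edge that lets me splice together a path in $A$ with a path in $B$.

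I would proceed by a short case analysis on the location of $x$ and $y$. If both $x,y\in A$, then by $k$-connectedness of $A$ there is a $k$-path from $x$ to $y$ contained in $A\subseteq A\cup B$, and we are done; the case $x,y\in B$ is symmetric. The only substantive case is the mixed one, say $x\in A$ and $y\in B$. Here I would invoke $k$-connectedness of $A$ to obtain a $k$-path $\pi_1=(x=u_0,u_1,\dots,u_s=p)$ in $A$, and $k$-connectedness of $B$ to obtain a $k$-path $\pi_2=(q=v_0,v_1,\dots,v_t=y)$ in $B$. Because $p$ and $q$ are $k$-adjacent by hypothesis, the concatenation $(u_0,\dots,u_s,v_0,\dots,v_t)$ is a legitimate $k$-path, and it lies in $A\cup B$.

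There is really no serious obstacle; the only thing to be careful about is that consecutive points of the spliced sequence are $k$-adjacent at every junction, including the transition from $u_s=p$ to $v_0=q$, which is exactly what the hypothesis gives. Since $x,y$ were arbitrary, $A\cup B$ is $k$-connected, which is the desired conclusion.
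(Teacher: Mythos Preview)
Your proof is correct and is the standard argument; the paper itself does not prove this statement but merely records it as a well-known fact with a citation to \cite{bbb}. There is nothing to compare against, and your case analysis with the bridge $(p,q)$ is exactly the natural verification.
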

\begin{fact}
\label{F11}
If $A$ and $B$ are sets of integer points, each of which is $k$-connected, and $A \cap B \neq \emptyset$,
then $A \cup B$ is $k$-connected.
\end{fact}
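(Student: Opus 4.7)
The plan is to verify $k$-connectedness of $A \cup B$ directly from the definition, by exhibiting for arbitrary $p,q \in A \cup B$ a $k$-path from $p$ to $q$ lying inside $A \cup B$. I would organize the argument by cases on where $p$ and $q$ live.

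If $p$ and $q$ both lie in $A$, the $k$-connectedness of $A$ supplies a $k$-path from $p$ to $q$ inside $A$, and this path is automatically a $k$-path in the larger set $A \cup B$. The case $p,q \in B$ is handled symmetrically, and in both of these cases the intersection hypothesis is not even needed.

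The only case that uses the hypothesis $A \cap B \neq \emptyset$ is $p \in A$ and $q \in B$ (or its mirror image). Here I would fix any point $z \in A \cap B$, invoke the $k$-connectedness of $A$ to obtain a $k$-path $\pi_1$ from $p$ to $z$ inside $A$, invoke the $k$-connectedness of $B$ to obtain a $k$-path $\pi_2$ from $z$ to $q$ inside $B$, and concatenate them at $z$ to form a single sequence of integer points from $p$ to $q$ lying entirely in $A \cup B$.

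I do not foresee any real obstacle: the statement is essentially a restatement of the fact that concatenation of paths at a common endpoint yields a path, combined with the trivial observation that a $k$-path inside a set $S$ remains a $k$-path inside any superset of $S$. The only mild subtlety is making sure the concatenation at $z$ is still a valid $k$-path, but this is immediate, since consecutive points along each $\pi_i$ are $k$-adjacent by construction and the join point $z$ is shared, so no new adjacency check is required at the seam.
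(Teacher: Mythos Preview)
Your argument is correct and is the standard elementary proof. The paper does not supply its own proof of this fact: it is listed among the ``well-known facts'' cited from \cite{bbb} and used without justification. There is nothing to compare against, and your case split with concatenation at a common point $z \in A \cap B$ is exactly the expected verification.
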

\begin{fact}
\label{L2}
Given a closed $n$-ball $B \subset \R^n$ with $B_{ \Z} \neq \emptyset$, $B_{ \Z}$ is $(n-1)$-connected.
\end{fact}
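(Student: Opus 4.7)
The plan is to induct on the dimension $n$. The base case $n=1$ is immediate, since a closed interval containing an integer point contains a block of consecutive integers, which is trivially $0$-connected. For the inductive step, let $B \subset \R^n$ have center $c = (c_1,\ldots,c_n)$ and radius $r$, with $B_\Z \neq \emptyset$. I would slice $B$ by hyperplanes perpendicular to the $x_n$-axis: for each $k \in \Z$, set $S_k = B \cap \{x_n = k\}$, which is either empty or a closed $(n-1)$-ball (once the hyperplane is identified with $\R^{n-1}$). Let $K = \{k \in \Z : S_k \cap \Z^n \neq \emptyset\}$.

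Two slicing facts drive the rest. First, $K$ is a contiguous block of integers: given $k_1 < k_2$ in $K$, the function $t \mapsto (t - c_n)^2$ is convex, so for any integer $k \in [k_1, k_2]$ one has $(k - c_n)^2 \le \max\bigl((k_1 - c_n)^2, (k_2 - c_n)^2\bigr)$; projecting an integer witness of whichever of $S_{k_1}, S_{k_2}$ is farther from $c_n$ onto the hyperplane $\{x_n = k\}$ (by replacing only its $n$-th coordinate) keeps it in $B$, producing an integer point in $S_k$. Second, for consecutive $k, k+1 \in K$ one can find $(n-1)$-adjacent integer points $z \in S_k$ and $z + e_n \in S_{k+1}$: if $c_n \ge k + \tfrac{1}{2}$ then $u + e_n \in B$ for every $u \in S_k \cap \Z^n$, while if $c_n \le k + \tfrac{1}{2}$ then $v - e_n \in B$ for every $v \in S_{k+1} \cap \Z^n$, since in each case the only coordinate change moves the point closer to $c$.

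To conclude, the inductive hypothesis applied to the $(n-1)$-ball $S_k$ shows that $S_k \cap \Z^n$ is $(n-2)$-connected in $\Z^{n-1}$, equivalently $(n-1)$-connected in $\Z^n$ within the slice $\{x_n=k\}$. Together with the bridging $(n-1)$-adjacencies between consecutive slices, an iterated application of Fact~\ref{F1} across $k \in K$ stitches $B_\Z = \bigcup_{k \in K} (S_k \cap \Z^n)$ into a single $(n-1)$-connected set. I expect the bridging step to be the most delicate point, since it is the one that genuinely uses the rigidity of balls (as opposed to arbitrary convex sets, where an analogous slicing argument can fail); the contiguity of $K$ and the final gluing are then essentially bookkeeping.
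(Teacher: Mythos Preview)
Your argument is correct. The paper, however, does not supply its own proof of this statement: Fact~\ref{L2} is listed among several ``well-known facts'' cited from~\cite{bbb} and invoked without justification, so there is no in-paper proof to compare against. Your slicing-and-bridging induction is a clean route to the result; the bridging step---shifting a lattice witness in one slice by $\pm e_n$ toward the center and observing that it remains in $B$---is indeed precisely where the Euclidean-ball hypothesis is used (and where the analogous claim for general convex bodies would fail), while the contiguity of $K$ and the final gluing via Fact~\ref{F1} are routine, as you anticipated.
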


\section{Main Result}
\label{main1}
In this section we prove the following theorem.
\begin{theorem}
Let $X \subset \R^n$, $n \geq 2$, be a disconnected set such that $Cl(X)$ is connected. Then the following hold:  
\begin{enumerate}
\item[1.]
$\Delta_r(X)$ is   $(n-1)$-connected for all $r > \sqrt{n}/2$.
\item[2.]
$\Delta_r(X)$ is at least 0-connected for all $r > \sqrt{n-1}/2$.
\end{enumerate}
These bounds are the best possible which  always respectively guarantee
$(n-1)$ and 0 connectedness of $\Delta_r(X)$.
\label{th1}
\end{theorem}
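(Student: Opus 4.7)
The overall strategy is to reduce the discrete connectedness of $\Delta_r(X)$ to the $\epsilon$-chain characterization of the connected set $Cl(X)$. Recall that a subset $S$ of a metric space is connected if and only if, for every $\epsilon>0$ and every $p,q\in S$, there is a finite sequence $p=y_0,y_1,\dots,y_k=q$ in $S$ with $d(y_i,y_{i+1})<\epsilon$. Applied to $Cl(X)$, this replaces the continuous path available in the path-connected case and allows us to bridge two integer points $z,z'\in\Delta_r(X)$ across the different connected components of $X$ via small steps in $Cl(X)$; each intermediate $y_i$ is then approximated by some $x_i\in X$ to within a controlled tolerance.

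\paragraph{Part 1.} Fix $z,z'\in\Delta_r(X)$ and choose $x_0,x_k\in X$ with $z\in B(x_0,r)$ and $z'\in B(x_k,r)$. Since $r>\sqrt{n}/2$, the number $\epsilon:=2r-\sqrt{n}$ is positive. Take an $(\epsilon/3)$-chain $y_0=x_0,\dots,y_k=x_k$ in $Cl(X)$ and, for each interior $i$, pick $x_i\in X$ with $d(x_i,y_i)<\epsilon/3$; the triangle inequality gives $d(x_i,x_{i+1})<\epsilon$ for every $i$. The midpoint $m_i=(x_i+x_{i+1})/2$ then satisfies $d(m_i,x_i)=d(m_i,x_{i+1})<r-\sqrt{n}/2$, so Fact~\ref{L1} applied to the ball $B(m_i,\sqrt{n}/2)$ produces an integer point $w_i$, and one more triangle inequality yields $d(w_i,x_i)\le r$ and $d(w_i,x_{i+1})\le r$; hence $w_i\in B(x_i,r)\cap B(x_{i+1},r)\cap\Z^n$. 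By Fact~\ref{L2} each $B(x_i,r)\cap\Z^n$ is $(n-1)$-connected, and by Fact~\ref{F11} the shared point $w_i$ merges consecutive ones. Concatenating the $(n-1)$-paths inside each $B(x_i,r)\cap\Z^n$ produces an $(n-1)$-path in $\Delta_r(X)$ from $z$ through $w_0,\dots,w_{k-1}$ to $z'$.

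\paragraph{Part 2.} The same chaining framework carries over, but Fact~\ref{L1} is unavailable at radius $\sqrt{n-1}/2$, so $B(x_i,r)$ may fail to contain any integer point. Instead I would refine the chain until consecutive $x_i,x_{i+1}$ lie in a common voxel or in two voxels sharing an $(n-1)$-face, and then invoke the $(n-1)$-dimensional analogue of Fact~\ref{L1}: any point on an $(n-1)$-face of a voxel is within $\sqrt{n-1}/2$ of some integer vertex of that face. Such a vertex is therefore within $r$ of $x_i$ and so lies in $\Delta_r(X)$, and all vertices of a single voxel are pairwise $0$-adjacent; stitching vertex-to-vertex along the chain produces a $0$-path from $z$ to $z'$. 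The main obstacle here is the voxel-level bookkeeping: successive vertex choices must simultaneously lie in $\Delta_r(X)$ and be $0$-adjacent, which is more delicate than the purely ball-based argument of Part 1.

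\paragraph{Sharpness.} Both bounds are shown optimal by explicit constructions. For Part 1 one takes two one-dimensional arms of $X$ meeting at the center of a voxel, so that at $r=\sqrt{n}/2$ the offset discretization splits into two $(n-1)$-components sitting at opposite vertices of that voxel. For Part 2 an analogous example, with the apex placed inside a voxel but off every $(n-1)$-face and the arms extending toward distinct sides, witnesses that $r=\sqrt{n-1}/2$ cannot in general guarantee $0$-connectedness.
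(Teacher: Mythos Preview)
Your Part~1 argument is correct and follows a genuinely different route from the paper. The paper orders the voxels meeting $X$ into a sequence $v_1,v_2,\dots$ so that $Cl(X)$ meets each $v_k\cap\bigcup_{i<k}v_i$ (this is their Claim~1), and then runs an induction on $k$; the base case is a contradiction argument splitting $Cl(X)$ across two unions of balls. Your $\varepsilon$-chain approach bypasses the voxel bookkeeping entirely and is more direct for Part~1. One correction: the ``if and only if'' you quote is false in general (e.g.\ $\mathbb Q$ is $\varepsilon$-chain-connected for every $\varepsilon>0$ yet totally disconnected); only the forward implication holds, and that is all you use, so the logic survives.

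Part~2 has a genuine gap. You propose to refine the chain until consecutive $x_i,x_{i+1}$ lie in a common voxel or in two voxels sharing an $(n-1)$-face, but this cannot always be arranged: if $Cl(X)$ passes through a grid vertex, points of $X$ arbitrarily near that vertex can sit in diagonally adjacent voxels that share only a $0$-face, no matter how small $\varepsilon$ is. Moreover, even granting an $(n-1)$-face $f$, you invoke ``any point on an $(n-1)$-face is within $\sqrt{n-1}/2$ of a vertex,'' but $x_i$ is not on $f$; you still need the projection estimate $d(x_i,f)<\varepsilon$ before that fact applies. Finally, the ``common voxel'' case is untreated: when $\sqrt{n-1}/2<r<\sqrt{n}/2$ the set $B(x_i,r)_{\Z}$ may be empty for many intermediate $i$, so one cannot simply hop ball-to-ball as in Part~1. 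These are precisely the issues you flag as ``delicate'' but do not resolve. The paper sidesteps them because its induction guarantees, at each step, a point of $Cl(X)$ actually lying on the shared face between the new voxel and the already-processed complex, so Fact~1 in dimension $n-1$ applies directly; the single-voxel base case is handled separately by noting that every nonempty $B(x,r)_{\Z}$ with $x$ in a voxel $v$ contains a vertex of $v$, and all vertices of $v$ are mutually $0$-adjacent.
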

\begin{proof}
The proof of the theorem is based on the following fact.
\begin{claim}
\label{Lnew}
Let $X \subseteq \R^n$ be an arbitrary disconnected set (possibly infinite), such that $Cl(X)$ is connected. 
Let $W(X)$ denote the (possibly infinite) set of voxels intersected by $X$.  
Then $W(X)$ can be ordered in a sequence $W(X)= \{ v_1, v_2, \dots \}$ with the following property:
\begin{equation}
Cl(X) \cap \left(v_k \cap \bigcup_{i = 1}^{k-1} v_i\right) \neq \emptyset, \ 
\forall k \geq 1.
\label{property1}
\end{equation} 
\end{claim}
\begin{proof}
To simplify the notation, let $\bigcup F$ stand for the union of a family of sets $F$. Let $W'(X)$ be a maximal by inclusion subset of $W(X)$ satisfying Property~(\ref{property1}). Note that $W'(X)$ always exists, no matter if $W(X)$ is finite or infinite.
Assume for contradiction that $W'(X) \neq W(X)$.
By the maximality of $W'(X)$ it follows that $X_1:=Cl(X) \cap \bigcup W(X)$ 
does not intersect the closed set $Y_1:=\bigcup (W(X) \setminus W'(X))$, and $X_2:=Cl(X) \cap \bigcup (W(X) \setminus W'(X))$ does not intersect the closed set $Y_2:=\bigcup W'(X)$.
Then we have that $Cl(X)$ is  the union of the nonempty sets $X_1$ and   
$X_2$, and each of them is disjoint from a closed superset of the other ($Y_1$ and $Y_2$, respectively), which is impossible if $Cl(X)$ is connected. \qed
\end{proof}

For the proof of both parts of the theorem we use induction on $k$ to establish the claimed connectedness of $\mathit{\Delta}_r(X \cap \bigcup_{i=1}^k v_i)$.

{\em Part 1.} \ \ Let $W(X)= \{ v_1, v_2, \dots \}$ be defined as in Claim~\ref{Lnew}, with a voxel ordering satisfying Property~(\ref{property1}).  

Let  $k=1$, i.e. the set $W(X)$ consists of a single voxel $v$ and 
$X \subseteq v$. 
By Fact~\ref{L1} we have that $\Delta_r(X) \neq \emptyset$. 
Denote for brevity $D=\Delta_r(X)$ and assume for contradiction that $D$ has at least two $(n-1)$-connected components. 
Let $D_1$ be one of these components. 
Denote $D_2 = D \setminus D_1 \neq \emptyset$ and define the sets 
$A_1 = \bigcup_{p \in D_1} B(p,r)$ and $A_2 = \bigcup_{p \in D_2} B(p,r)$.
Then we have that $X \subseteq A_1 \cup A_2$. 
To see why, assume that there is $x \in X$, $x \notin A_1 \cup A_2$.
By Fact~\ref{L1},   $B(x,r)$ contains an integer point $q$.
Then $q \in D$ and $d(x,q) \leq r$. 
Then $x \in B(q,r) \subseteq \bigcup_{p \in D} B(p,r) =  A_1 \cup A_2$, a contradiction. 

Since $X$ is bounded, it follows that $D$ is finite, and therefore $A_1$, $A_2$, and $A_1 \cup A_2$ are closed.  
Hence, 
$$
Cl(X) \subseteq Cl(A_1 \cup A_2)=A_1 \cup A_2.
$$
Next, we observe that $Cl(X) \cap A_1 \cap A_2 \neq \emptyset$.
Otherwise, we would have 
$$
Cl(X) = (Cl(X) \cap A_1) \cup (Cl(X) \cap A_2),
$$
where 
$Cl(X) \cap A_1 \neq \emptyset$ and  
$Cl(X) \cap A_2 \neq \emptyset$, 
as $Cl(X) \cap A_1$ is disjoint from $A_2$ and  
$Cl(X) \cap A_2$ is disjoint from $A_1$; thus $Cl(X)$ would be a union of two nonempty sets, each of which is disjoint from a closed set containing the other (since $A_1$ and $A_2$ are closed), which contradicts the connectedness of $Cl(X)$.
Then there exist points $p_1 \in D_1$, $p_2 \in D_2$, such that 
$$
Cl(X) \cap  B(p_1,r) \cap  B(p_2,r) \neq \emptyset.
$$
Hence, there is a point $a \in Cl(X)$, such that $a \in  B(p_1,r) \cap  B(p_2,r)$, which is possible only if   
$d(a,p_1) \leq r$ and $d(a,p_2) \leq r$.  
Thus it follows that $p_1,p_2 \in  B(a,r)$. 
Since $a$ is a limit point of $X$, there is a point $b \in X$,
such that $p_1,p_2 \in  B(b,r)$, provided that the radius $r$ is strictly greater than $\sqrt{n}/2$. 
Since $b \in X$, we also have that $B(b,r)_{\Z} \subseteq D$. 
Then Fact~\ref{L2} implies that points $p_1$ and $p_2$ are $(n-1)$-connected in $D$, which contradicts the assumption that 
$D_1$ is an $(n-1)$-connected component of $D$ with $D_1 \neq D$.

Now suppose that $\mathit{\Delta}_r(X \cap \bigcup_{i=1}^{k} v_i)$ is $(n-1)$-connected for some $k \geq 1$. 
We have  
$$
X \cap \bigcup_{i=1}^{k+1} v_i = (X \cap v_{k+1}) \cup (X \cap \bigcup_{i=1}^{k} v_i).
$$
Since the closed $r$-neighborhood of a union of two sets equals the union of their $r$-neighborhoods, it follows that
$$
\mathit{\Delta}_r(X \cap \bigcup_{i=1}^{k+1} v_i) = \mathit{\Delta}_r(X \cap v_{k+1}) \cup \mathit{\Delta}_r(X \cap \bigcup_{i=1}^{k} v_i).
$$
Let us denote by $f$ a common face of voxel $v_{k+1} \in W(X)$ and the polyhedral complex composed by the voxels $v_1,v_2,\dots,v_k$, i.e., $F \subset v_{k+1} \cap \bigcup_{i=1}^{k} v_i$.
W.l.o.g., we can consider the case where $f$ is a facet of $v_{k+1}$ (i.e., a cell of topological dimension $n-1$), the cases of lower dimension faces being analogous. 
Let $H$ be the hyperplane in $\R^n$ which is the affine hull of $f$. 
Let $\Z^{n-1}_{H}$ be the subset  of the set of grid-points $\Z^n$
contained in $H$.

By Claim~\ref{Lnew}, there is a point $p \in Cl(X) \cap f$. 
Consider the $n$-ball $B^n(p,r)$.  
Then $B^{n-1}(p,r) = B^n(p,r) \cap H$ is an $(n-1)$-ball with the same center and radius.
Applying Fact~\ref{L1}  to $B^{n-1}(p,r)$ in the $(n-1)$-dimensional hyperplane $H$,   
we obtain that $B^{n-1}(p,r)$ contains at least one grid point 
$q \in \Z^{n-1}_{H}$, which is a vertex of facet $f$.
Since $p \in Cl(x)$ is a limit point of $X$, 
there exists a point $p' \in X$, such that the ball $B^{n-1}(p',r)$ contains $q$, too.
By construction, $q$ is common for the sets $\mathit{\Delta}_r(X \cap v_{k+1})$ and $\mathit{\Delta}_r(X \cap \bigcup_{i=1}^{k} v_i)$.
The former is $(n-1)$-connected by the same argument used in the induction basis, while the latter is $(n-1)$-connected by  the induction hypothesis. 
Then by Fact~\ref{F11}, their union $\mathit{\Delta}_r(X \cap \bigcup_{i=1}^{k+1} v_i)$ is $(n-1)$-connected, as well.
This establishes Part 1.

\smallskip

{\em Part 2.} \ \ The proof of this part is similar to the one of Part 1. 
Note that in the base case $k=1$ (i.e. when $X$ is contained in a single voxel $v$), if $\sqrt{n-1}/2 < r < \sqrt{n}/2$ then it is possible to have 
$\Delta_r(X) = \emptyset$ (e.g., if $X$ consists of a single point that is the center of $v$). 
If that is the case, the statement follows immediately.  
Thus, suppose that $\Delta_r(X) \neq \emptyset$. 
By definition, $\Delta_r(X) = \bigcup_{x\in X} B(x,r)_{\Z}$.
For any $x \in X$,  $B(x,r)_{\Z}$ is $(n-1)$-connected by Fact~\ref{L2}. 
We also have that any of the nonempty sets $B(x,r)_{\Z}$ contains  a vertex of $v$.
Since any two vertices of a grid cube are at least 0-adjacent, it follows that any subset of vertices of $v$ is at least 0-connected. 
Then by Fact~\ref{F1}, $\Delta_r(X)$ is at least 0-connected.
 
The rest of the proof parallels the one of Part 1, with the only difference that 
point $q$ is common for the sets $\mathit{\Delta}_r(X \cap v_{k+1})$ and $\mathit{\Delta}_r(X \cap \bigcup_{i=1}^{k} v_i)$, each of which is 0-connected (the former by an argument used in the induction basis, and  the latter by  the induction hypothesis). 
Then Fact~\ref{F11} implies that their union $\mathit{\Delta}_r(X \cap \bigcup_{i=1}^{k+1} v_i)$ is 0-connected, as stated.

Figure~\ref{fig} illustrates that the obtained bounds for $r$ are 
the best possible: if $r$ equals $\sqrt{n}/2$ (resp. $\sqrt{n-1}/2$), then 
$\Delta_r(X)$ may not be $(n-1)$-connected (resp. 0-connected).
This completes the proof of the theorem. \qed
\end{proof}
\begin{figure}[h!]
\begin{center}
\begin{tabular}{c c c}
\includegraphics[clip=true, width=37.5mm, clip=true]{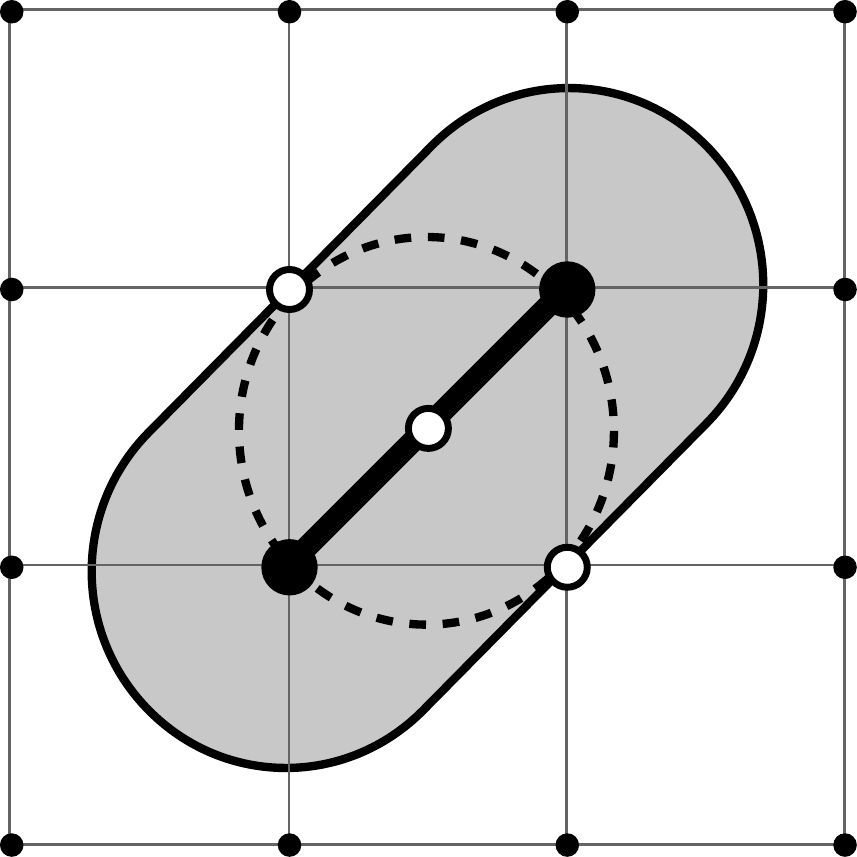} & 
\hspace{10mm} &
\includegraphics[clip=true, width=50mm, clip=true]{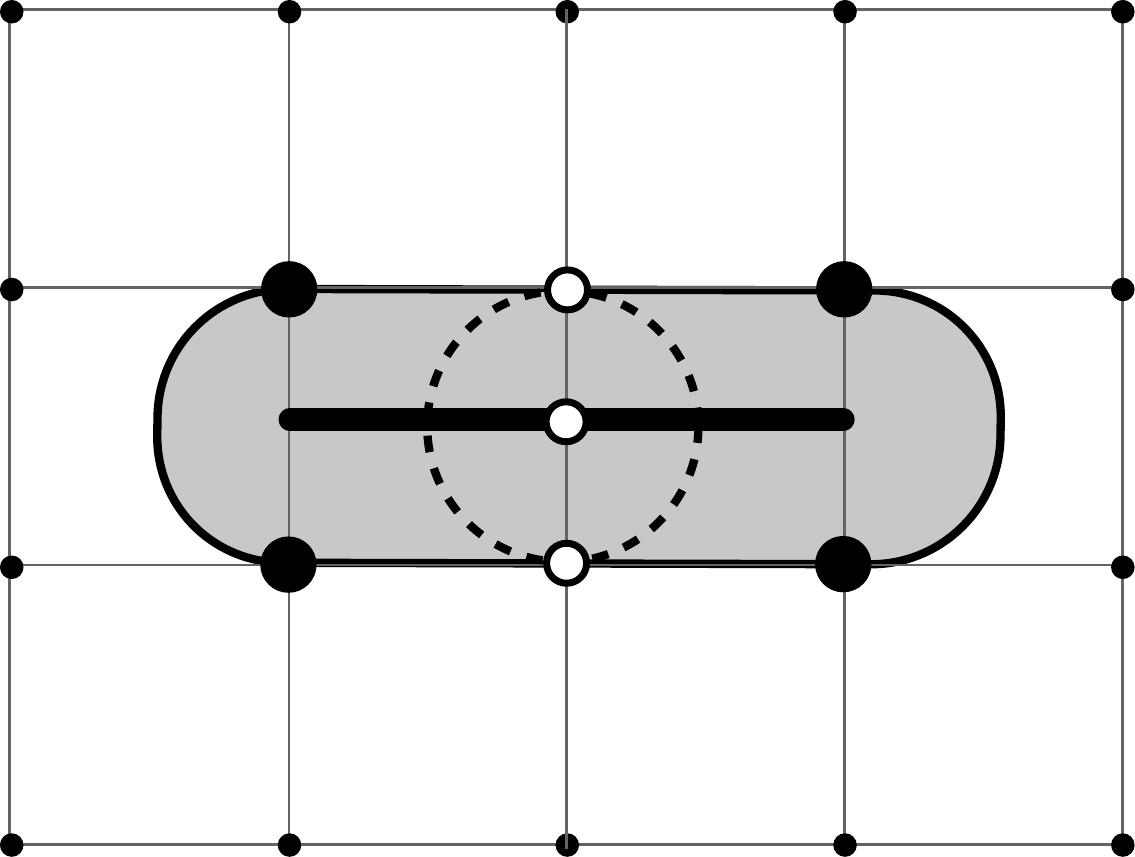} 
\end{tabular}
\end{center}
 \caption{{\em Left:} Offset radius $r=\sqrt{2}/2$, {\em Right:}  offset radius $r=1/2$. In both figures: $X$ is the thick line segment with missing midpoint marked by a hollow dot, the shaded region is the offset, the offset discretization consists of the large thick dots, the hollow dots on the offset boundary do not belong to the offset and to the discretization.} 
\label{fig}
\end{figure}
The proof of Theorem~\ref{th1} implies the following corollary.
\begin{corollary}  
If $X \subseteq \R^n$ ($n \geq 2$) is connected, then  
$\Delta_r(X)$ is   $(n-1)$-connected for all $r \geq \sqrt{n}/2$, and 
$\Delta_r(X)$ is at least 0-connected for all $r \geq \sqrt{n-1}/2$.
\label{cor1}
\end{corollary}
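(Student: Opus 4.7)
\noindent The plan is to rerun the proof of Theorem~\ref{th1}, exploiting the stronger hypothesis that $X$ itself (not only $Cl(X)$) is connected to dispense with the limit-point perturbation that produced the strict bounds in the theorem, thereby relaxing them to non-strict ones.

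Since $X$ connected implies $Cl(X)$ connected, Claim~\ref{Lnew} applies as stated and yields an ordering $W(X) = \{v_1, v_2, \dots\}$ with Property~(\ref{property1}). I would first strengthen that claim in the present setting to the assertion that such an ordering can be chosen so that $X \cap (v_k \cap \bigcup_{i<k} v_i) \neq \emptyset$ for every $k \geq 1$. The proof would be a verbatim copy of the proof of Claim~\ref{Lnew} with $Cl(X)$ replaced by $X$: under maximality of $W'(X)$, the sets $X \cap \bigcup W'(X)$ and $X \cap \bigcup(W(X) \setminus W'(X))$ are both closed in $X$, nonempty, and each is disjoint from a closed superset of the other, so connectedness of $X$ itself furnishes the contradiction.

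I would then mimic the inductive argument of Theorem~\ref{th1}. In the base case, where $X$ is contained in a single voxel, connectedness of $X$---applied to the decomposition $X \subseteq A_1 \cup A_2$ used in the theorem, with both $X \cap A_i$ nonempty and closed in $X$---yields a splitting point $a \in X \cap A_1 \cap A_2$. Since $a \in X$, the ball $B(a,r)_{\Z}$ is already contained in $\Delta_r(X)$, is nonempty by Fact~\ref{L1} whenever $r \geq \sqrt{n}/2$, and is $(n-1)$-connected by Fact~\ref{L2}; hence points $p_1, p_2$ chosen from the two putative components of $\Delta_r(X)$ would already be $(n-1)$-connected, a contradiction. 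In the induction step, the strengthened Claim~\ref{Lnew} places the common-face point $p$ in $X$, so the integer vertex $q$ yielded by Fact~\ref{L1} inside the face hyperplane lies automatically in both $\Delta_r(X \cap v_{k+1})$ and $\Delta_r(X \cap \bigcup_{i \leq k} v_i)$, and Fact~\ref{F11} closes the step. Part~2 is handled identically, with the $(n-1)$-dimensional application of Fact~\ref{L1} replaced by its $(n-2)$-dimensional counterpart, giving the sharp bound $r \geq \sqrt{n-1}/2$.

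The step I expect to require the most care is the strengthened form of Claim~\ref{Lnew}: while its proof is structurally identical to that of the original, one must verify that after replacing $Cl(X)$ by $X$ the two parts of $X$ are still disjoint from closed supersets of one another, so that connectedness of $X$ (rather than of $Cl(X)$) actually delivers the contradiction. Once that is in place, the remainder of the corollary is either a line-by-line transcription of Theorem~\ref{th1}'s argument or becomes simpler, since the perturbation from a limit point to a nearby point of $X$---the sole source of the strict inequalities---is no longer needed.
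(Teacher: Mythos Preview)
Your approach is exactly what the paper intends: it states only that ``the proof of Theorem~\ref{th1} implies'' the corollary, and your proposal makes explicit precisely the needed modification---namely, that with $X$ itself connected the limit-point perturbation (the only place in the proof of Theorem~\ref{th1} where the strict inequality is used) becomes unnecessary, so the bounds relax to non-strict ones. Your strengthened version of Claim~\ref{Lnew} is the right ingredient, and its proof indeed goes through verbatim with $X$ in place of $Cl(X)$; one small slip is that in Part~2 the induction step still uses Fact~\ref{L1} in dimension $n-1$ (not $n-2$), which is exactly what yields the threshold $\sqrt{n-1}/2$.
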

The above in turn implies:
\begin{corollary}  
If $X \subset \R^n$ ($n \geq 2$) is disconnected but $U(X,r)$ is connected for some $r>0$, then  
$\Delta_{r+\sqrt{n}/2}(X)$ is $(n-1)$-connected, and 
$\Delta_{r+\sqrt{n-1}/2}(X)$ is at least 0-connected. 
\label{cor2}
\end{corollary}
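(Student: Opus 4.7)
The plan is to reduce both statements directly to Corollary~\ref{cor1} applied to the auxiliary set $Y := U(X,r)$, which is connected by hypothesis. The key technical ingredient is the additivity identity
\[
U(U(X,s), t) = U(X, s+t) \qquad \text{for all } s, t \geq 0,
\]
which expresses that enlarging an offset by $t$ is the same as taking a larger offset from the original set. I would verify this identity first. The inclusion $U(X,s+t) \subseteq U(U(X,s),t)$ uses the triangle inequality with a point on the segment from $x$ to $y$: given $y \in U(X,s+t)$, pick $x \in X$ with $d(x,y) \leq s+t$, and set $z = y$ if $d(x,y) \leq s$, otherwise $z = x + (s/d(x,y))(y-x)$, so that $d(x,z) \leq s$ and $d(z,y) \leq t$. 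The reverse inclusion is an immediate application of the triangle inequality.

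Once this identity is available, the rest is mechanical. Intersecting both sides with $\Z^n$ yields
\[
\Delta_{r+t}(X) = U(X, r+t) \cap \Z^n = U(Y, t) \cap \Z^n = \Delta_{t}(Y)
\]
for every $t \geq 0$. Since $Y$ is connected, Corollary~\ref{cor1} applies: taking $t = \sqrt{n}/2$ gives $(n-1)$-connectedness of $\Delta_{r+\sqrt{n}/2}(X)$, and taking $t = \sqrt{n-1}/2$ gives $0$-connectedness of $\Delta_{r+\sqrt{n-1}/2}(X)$.

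The only step requiring any care is the additivity identity, and even this is routine. Note that no limiting argument is needed because Corollary~\ref{cor1} allows equality in the radius condition (in contrast to Theorem~\ref{th1}, which requires strict inequality), so the precise thresholds $\sqrt{n}/2$ and $\sqrt{n-1}/2$ can be substituted for $t$ directly. The whole proof thus collapses to verifying the semigroup property of offsets and quoting Corollary~\ref{cor1} on $Y = U(X,r)$.
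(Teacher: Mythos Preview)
Your proposal is correct and is precisely the argument the paper has in mind: the authors state only that Corollary~\ref{cor1} ``in turn implies'' Corollary~\ref{cor2}, and your reduction via $Y=U(X,r)$ together with the offset additivity $U(U(X,s),t)=U(X,s+t)$ makes this implication explicit. The observation that Corollary~\ref{cor1} admits equality in the radius threshold (unlike Theorem~\ref{th1}) is exactly what makes the argument go through without any limiting step.
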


\section{Algorithmic aspects and applications}
\label{concl}
 
Let $X$ be a closed disconnected subset of $\mathbb{R}^n$ where $n \geq 2$. 
Denote by $\alpha_j(X)$, $j=0$ or $n-1$, the minimum value of an offset radius    
for which $\Delta_{\alpha_j(X)}(X)$ is $j$-connected. 
Let $\rho(X)$ be the smallest offset radius for which $U(X,r)$ is connected. 
Knowing the exact value of $\rho(X)$ or having a bound on it, one can easily estimate $\alpha_j(X)$ with the help of Corollary~\ref{cor2}. 

Let $X \subseteq \mathbb{R}^n$ be a bounded set with closed components $X_1,X_2,\dots,X_m$ and 
$g_{ij}=g(X_i,X_j)$ be the gap between $X_i$ and $X_j$ for $1 \leq i,j \leq m$. 
It is not hard to see that 
$\rho(X) \leq \frac{1}{2}\delta(X)$, where $\delta(X):= \min_i \max_j g_{ij}$. Given the values $g_{ij}$, $\delta(X)$ can be found in $O(m^2)$ time. Another upper bound on $\rho(X)$ is given by the radius $\omega(X)$ of the minimal bounding sphere for $X$.   
Recall that, given a non-empty family of bounded sets $X$ in $\R^n$, a {\em minimal bounding sphere} for that family is the sphere of minimum radius such that the closed ball bounded by the sphere contains all sets of the family.
Corollary~\ref{cor2} implies that 
$\Delta_{(\delta(X)+\sqrt{n})/2}(X)$ and $\Delta_{\omega(X)+\sqrt{n}/2}(X)$ are $(n-1)$-connected, while $\Delta_{(\delta(X)+\sqrt{n-1})/2}(X)$ and 
$\Delta_{\omega(X)+\sqrt{n-1}/2}(X)$ are $0$-connected. 
For the special case where $X$ is a set of points in $\R^n$, $\omega(X)$ can be computed in linear time $O(m)$ for any fixed dimension $n$  by Megiddo's ``prune and search" minimal bounding sphere algorithm 
\cite{megiddo1,megiddo2}. 
In that case we also have the following relation.   
\begin{proposition}
\label{31}
Let $X$ be a set of $m\geq 2$ points in $\R^n$.
Then $\rho(X) \leq \omega(X) \leq \delta(X)$.
\end{proposition}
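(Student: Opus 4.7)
The plan is to verify the two inequalities separately; each follows almost immediately from the definitions once the right reformulation is made. Since $X$ consists of isolated points $x_1,\dots,x_m$, the components are singletons, so $g_{ij}=d(x_i,x_j)$ and $U(X,r)=\bigcup_{i=1}^m B(x_i,r)$.

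First, I would prove $\omega(X)\leq\delta(X)$. By definition $\delta(X)=\min_i\max_j d(x_i,x_j)$, so there is an index $i^*$ with $d(x_{i^*},x_j)\leq\delta(X)$ for every $j$. Thus $X\subseteq B(x_{i^*},\delta(X))$, i.e.\ the closed ball of radius $\delta(X)$ centered at $x_{i^*}$ is a bounding sphere of $X$. Minimality of $\omega(X)$ then yields $\omega(X)\leq\delta(X)$.

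Next, I would prove $\rho(X)\leq\omega(X)$. Let $c\in\R^n$ be the center of the minimal bounding sphere, so $d(c,x_i)\leq\omega(X)$ for every $i$. Equivalently, $c\in B(x_i,\omega(X))$ for each $i$, so the common point $c$ lies in every ball appearing in the union $U(X,\omega(X))=\bigcup_{i=1}^m B(x_i,\omega(X))$. Each ball is (path-)connected and all of them share $c$, so the union is connected. By the definition of $\rho(X)$ as the smallest radius for which $U(X,r)$ is connected, we conclude $\rho(X)\leq\omega(X)$.

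There is essentially no hard step: the only substantive observation is the symmetric reading of the bounding-sphere condition ``$x_i\in B(c,\omega(X))$ for all $i$'' as ``$c\in B(x_i,\omega(X))$ for all $i$,'' which turns the bounding property into a witness for the connectedness of the offset. Combining the two inequalities gives $\rho(X)\leq\omega(X)\leq\delta(X)$, as required.
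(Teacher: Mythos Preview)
Your proof is correct and, for both inequalities, considerably more direct than the paper's own argument.

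For $\omega(X)\leq\delta(X)$ the paper proceeds by a geometric case analysis on the minimal bounding sphere $S(X)$ (center $\mathcal{C}$, radius $R$): it distinguishes whether $S(X)$ contains a diametral pair of points of $X$, and if not, uses that $\mathcal{C}$ lies in the convex hull of the points of $X$ lying on $S(X)$; in each case it argues that every point of $X$ has some other point of $X$ at distance at least $R$, whence $\min_i\max_j d(x_i,x_j)\geq R$. Your one-line argument---pick $i^*$ realizing the outer minimum and observe that $B(x_{i^*},\delta(X))$ is a bounding ball---bypasses all of this. What the paper's longer route buys is the equality characterization: $\omega(X)=\delta(X)$ if and only if some point of $X$ coincides with the center $\mathcal{C}$ of the minimal bounding sphere. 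Your argument does not yield this, though it was not asked for in the statement.

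For $\rho(X)\leq\omega(X)$ the paper's sketch does not spell out an argument at all; your observation that the center $c$ of the bounding sphere lies in every ball $B(x_i,\omega(X))$, so that $U(X,\omega(X))$ is star-shaped about $c$ and hence connected, is exactly the right way to see it.
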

\begin{proof}(sketch)\\
Let $S(X)$ be the minimum bounding sphere of $X=\{p_1,p_2,\dots,p_m \}$ with center ${\cal C}$ and radius $R$. If $X=\{ p_1,p_2 \}$, then $p_1$ and $p_2$ are end-points of a diameter of $S(X)$ and $\delta(X) = 2R > \omega(X)=R$. 

Now suppose that $m\geq 3$. Suppose that $S(X)$ contains two points $a$ and $b$ from $X$, which are endpoints of a diameter of $S(X)$. 
Let $c$ be another point from $X$.   
If $c \neq {\cal C}$, then it is easy to see that at least one of the inequalities $d(a,c)>R$ or $d(b,c)>R$ holds.     
If $c={\cal C}$, then $d(a,c)=R$. 
Thus we have that that $R=\omega(X)\leq \delta(X)$, as equality holds if and only if a point from $X$ is a center of $S(X)$.  

Now consider the case where $S(X)$ does not contain endpoints of a diameter of $S(X)$. 
Then $S(X)$ contains a set $M$ of at least three points, such that the center ${\cal C}$ belongs to the  convex hull $conv(M)$ of $M$ (otherwise   the sphere $S(M)$ would not be minimum enclosing for $X$).
Suppose that $X$ contains no point at the center ${\cal C}$. 
Then there are points $a, b \in M$ with $d(a,b) > R$, which once again implies $\delta(X) > R = \omega(X)$. To see why, assume for contradiction that the diameter $diam(M)$ of $M$ satisfies $diam(M) \leq R$. Then, if $diam(M) < R$, the polytope $conv(M)$ cannot contain ${\cal C}$ which is at a distance $R$ from any of its vertices that are elements of $M$. If $diam(M)=R$, then ${\cal C}$ must be among the vertices of $conv(M)$, since $diam(M)$ equals the diameter of $conv(M)$ and is achieved for a pair of its vertices.
If $X$ contains a point that coincides with ${\cal C}$, then we clearly have $\delta(X) = R = \omega(X)$. \qed
\end{proof}

It was shown in \cite{bbb} that, given an array of gap values $A(X)=\{ g_{ij}, \ i,j=1,2,\dots, m \}$, $\rho(X)$ can be computed in $O(m^4)$ time. Here we observe that this can be performed much more efficiently in $O(m^2)$ time by constructing a minimum spanning tree of a complete graph on $m$ vertices, for which the array $A(X)$ is the adjacency matrix of edge weights. This can be done, e.g., by Prim's algorithm \cite{prim} with $O(m^2)$ arithmetic operations.   Then $\rho(X)$ is the value of the maximum edge weight in the obtained spanning tree (recall that the (multi)set of weights is unique for all minimum spanning trees of a graph).

In the Introduction we briefly discussed the theoretical and practical worth of results like those presented in this article. 
We conclude by adding one more comment.  
Suppose that a connected set $X$ (e.g., a continuous image) to be discretized is partially ``flawed" and made noisy by discarding some isolated points or lines from the image, whose removal makes it disconnected. Nonetheless, Theorem~\ref{th1} guarantees that one can get a faithful connected digitization of $X$ by choosing an offset size specified by the theorem.

In this note we obtained theoretical conditions for connectedness of offset discretizations of sets in higher dimensions. An important future task is seen in computer implementation and testing the topological properties and visual appearance of offset discretizations of varying  radius. It would also be interesting to study similar properties of other basic types of discretization. 


\end{document}